\documentclass[11pt]{article}

\usepackage[margin=1.1in]{geometry}
\usepackage{amsmath,amssymb,amsthm}
\usepackage{microtype}
\usepackage{graphicx}
\usepackage[round]{natbib}
\usepackage{titlesec}
\usepackage{hyperref}
\usepackage{xurl}
\hypersetup{colorlinks=true,linkcolor=black,citecolor=black,urlcolor=blue,
  pdftitle={Marginally Useful},pdfauthor={Peter Cotton}}

\titleformat{\paragraph}[runin]{\normalfont\itshape}{}{0pt}{}
\titlespacing*{\paragraph}{0pt}{0.6\baselineskip}{0.6em}

\newtheorem{proposition}{Proposition}
\theoremstyle{definition}
\newtheorem{remark}{Remark}

\newcommand{\E}{\mathbb{E}}
\renewcommand{\Pr}{\mathbb{P}}
\newcommand{\KL}{\mathrm{KL}}
\newcommand{\given}{\,\vert\,}

\title{Marginally Useful:\\ An Information-Gap Identity in Split Conformal Prediction}

\author{Peter Cotton\thanks{Global Strategic Minerals Corporation.
  Email: \texttt{peter.cotton@gsmc.ai}. Code, figures, and an interactive companion are at
  \url{https://conformalprediction.net} and
  \url{https://github.com/microprediction/conformalprediction}.}}
\date{}

\begin{document}
\maketitle

\begin{abstract}
\noindent
Conformal prediction has been touted as a more formal, rigorous approach to adding
uncertainty to a forecast. The sole objective of this note is to point out that rigor cuts
both ways in the case of residual pooling, the technique used in the vast majority of
conformal prediction applications. The fact that unconditional guarantee of coverage is
provided is not in question, but we make clear, we believe for the first time, that there is
an opposing guarantee too: a permanent gambit of logarithmic-score regret which no amount of
data or tuning can subsequently reduce. We give the exact size of the sacrifice, and a
financial reading of it as the growth rate of an oracle adversary betting against odds set
by someone using residual pooling.
\end{abstract}

\section{The claim}

In recent years conformal prediction has gained attention in the machine learning community
in particular. However, there is a danger this emphasis becomes a substitute for careful
statistical modeling. The use of split conformal prediction, where the field started, is
essentially synonymous with the statistical advice that says one's model residuals are best
modeled by their past empirical distribution. Such advice is arguably naive in fields like
finance where stochastic volatility is a mainstay. On the other hand there are many
well-used algorithms in classification, to pick one area, where the off the shelf outputs
(``proba'') tend to be very poorly calibrated, and in those settings adding a conformal
wrapper has real utility.

Our point is not to persuade anyone against pragmatism. It is that there can be no free
lunch, and that there is nothing sacred about the empirical distribution however practical
it may prove as a first remediation. The purpose of this note is to convert that common
sense into a crisp mathematical statement.

It should go without saying that the split conformal guarantee is not in question. Given
exchangeable data, a fitted predictor, and a nonconformity score, split conformal prediction
returns a set $C_\alpha(x)$ with
\begin{equation}\label{eq:marginal}
  \Pr\!\big(Y_{n+1}\in C_\alpha(X_{n+1})\big)\;\ge\;1-\alpha,
\end{equation}
finite-sample and with no assumption on the data distribution
\citep{vovk2005algorithmic,lei2018distribution,angelopoulos2023gentle}. The danger lies
solely in the {\em interpretation} of \eqref{eq:marginal} when a user wants a conditional
distributional prediction of the next outcome. The averaging here takes place over both
$X_{n+1}$ and $Y_{n+1}$, and that is not at all the usual notion of prediction. The paper
\citet{lei2014distribution} is one of many warnings about this in the literature, where it
is stated that ``a good estimator must satisfy something more than marginal coverage''.

The claims nonetheless get stronger as one descends from the journals toward the
practitioner. One widely used forecasting library tabulates methods by ``Calibration
Guarantee'', awarding the check mark to conformal prediction and a tilde to bootstrap,
quantile regression and ARIMA alike \citep{nixtla_conformal}. A popular introduction adds
that with Bayesian posteriors or bootstrapping ``we have no guarantees that these approaches
will perform well on new data'' \citep{molnar_conformal}. From there it is a short step to
conformal prediction as the thing that ``converts any statistical, machine or deep learning
model into a probabilistic prediction model'' \citep{manokhin_neuralprophet}. That is the
step we think cannot be taken for free, and below we price it.

\section{Setup}

Let $(X_i,Y_i)_{i=1}^{n+1}$ be exchangeable. Fix a point predictor $\hat\mu$ trained on a
disjoint set, and a \emph{nonconformity score} $s(x,y)$. The canonical choice in regression
is the absolute residual $s(x,y)=|y-\hat\mu(x)|$. Compute calibration scores
$s_i=s(X_i,Y_i)$ for $i=1,\dots,n$ and let
\begin{equation}\label{eq:q}
  \hat q \;=\; s_{(k)}, \qquad k=\big\lceil (n+1)(1-\alpha)\big\rceil,
\end{equation}
the $k$-th order statistic (with $\hat q=+\infty$ when $k>n$). The split-conformal set is
\begin{equation}\label{eq:set}
  C_\alpha(x)=\{y:\ s(x,y)\le \hat q\}
  \;=\;[\hat\mu(x)-\hat q,\ \hat\mu(x)+\hat q]
\end{equation}
for the absolute-residual score. Exchangeability of the augmented scores makes the rank of
$s_{n+1}$ uniform, which yields \eqref{eq:marginal}
\citep{vovk2005algorithmic,lei2018distribution}. None of this is in dispute. The question is
what it is taken to mean.

The guarantee one can have distribution-free is the marginal one, and a marginally valid
band ``tends to overestimate the set when $x$ is in the high density area and to
underestimate for low density $x$'' \citep{lei2014distribution}. It over-covers the easy
inputs and under-covers the hard ones. That's the game, and Figure~\ref{fig:margcond} shows
it on heteroscedastic data: $90\%$ overall, near-$100\%$ where the noise is small, well
below target where it is large. Conditional coverage is not a repair one can buy, since a
band with non-trivial finite-sample conditional validity has infinite expected length at
almost every point of a continuous distribution \citep{lei2014distribution}, a result
refined by \citet{barber2021limits}.

\begin{figure}[t]
\centering
\includegraphics[width=0.88\textwidth]{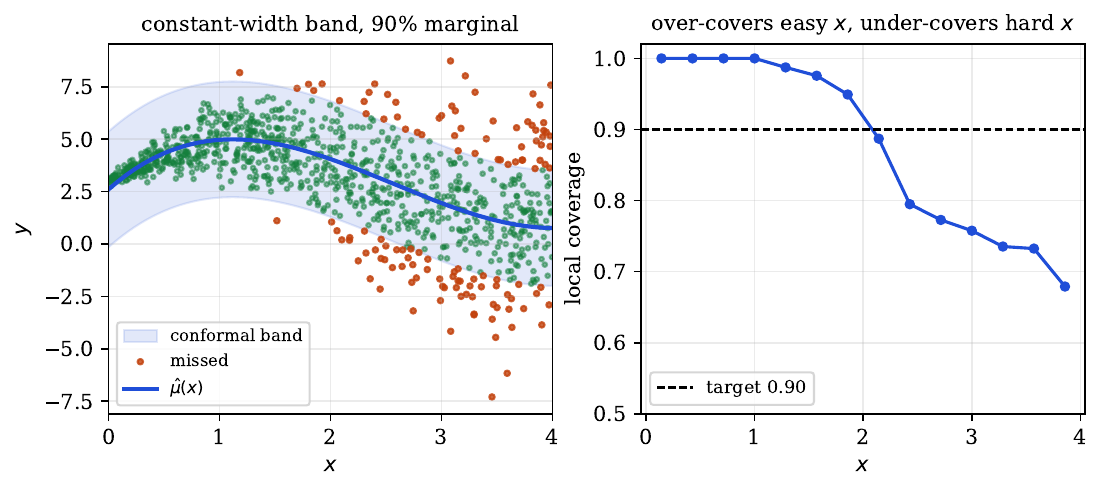}
\caption{Marginal is not conditional. A constant-width split-conformal band (left) attains
$90\%$ \emph{marginal} coverage on heteroscedastic data, but its \emph{local} coverage
(right) runs near $100\%$ where the noise is small and well below target where it is large.}
\label{fig:margcond}
\end{figure}

Marginal validity is not, on its own, evidence of a good method. We would argue that
coverage and log-likelihood belong on separate axes, as the two separate things a forecaster
reports. Those can be a point predictor $\hat\mu(x)$ and a full predictive density
$f(\cdot\given x)$. The canonical residual score $s(x,y)=|y-\hat\mu(x)|$ uses only
$\hat\mu$, the \emph{location} of the forecast. It never sees the spread or shape of $f$, so
the conformal set and its coverage are invariant to exactly the part of $f$ that the
log-score rewards.

\begin{remark}[Residual-score coverage does not constrain the log-score]\label{rem:orth}
Two forecasters with the \emph{same} location $\hat\mu$ but different predictive densities
$f$ have \emph{identical} conformal sets, and identical marginal coverage at every level
$\alpha$, while their expected log-scores $\E[\log f(Y\given X)]$ may differ by an arbitrary
amount. Take $X$ degenerate and $Y\sim N(0,1)$, so both use $s(x,y)=|y|$ and the conformal
set is the fixed interval $[-\hat q,\hat q]$ whatever density is claimed. With
$f_\sigma=N(0,\sigma^2)$, $\E[\log f_\sigma(Y)]=-\tfrac12\log(2\pi\sigma^2)-1/(2\sigma^2)$
diverges to $-\infty$ at both ends of the $\sigma$ range, while the set never moves.
\end{remark}

This is nothing more than a formal version of a point numerous authors have been making
informally. Perhaps the most blunt of these commentators is \citet{recht2024cover}, who
writes that conformal validity ``almost invites you to use garbage prediction functions''.
How could it be otherwise, when the spread never entered \eqref{eq:q} in the first place?

\section{The conformal information gap}

Our identities apply when split conformal prediction is made to output a distribution by means
of the \emph{conformal predictive system} (CPS) of \citet{vovk2019nonparametric}. We view this
 as a constrained attempt to fix the residuals {\em subject to the
restriction of using only one unconditional shared shape} as compared to a data-dependent
scale such as one would see in, say, a GARCH model. 

To wit, fix a location predictor $\hat\mu$ with residual $R=Y-\hat\mu(X)$ of marginal density $g$
and CDF $G$. In the large-calibration limit the CPS predictive distribution is
$\Pi_x(y)=G(y-\hat\mu(x))$, with density $\pi_x(y)=g(y-\hat\mu(x))$. This is the base
location forecast re-leveled by the marginal signed-residual law, and the estimation in it
was all done by the base model and the residual sample. Conformal supplies the leveling and
nothing else. 

One can write the true conditional residual density as $r(\cdot\given x)$, and the
oracle density as $q^\star(y\given x)=r(y-\hat\mu(x)\given x)$. We let $\bar r=g$ be the
marginal residual density.

\begin{proposition}[The residual-information gap]\label{prop:gap}
With $I(R;X)=\E_X\,\KL(r(\cdot\given X)\,\|\,\bar r)\ge 0$ the mutual information between
the residual and the input, and assuming the relevant densities and information quantities
are finite,
\begin{equation}\label{eq:gap}
  \E[\log q^\star(Y\given X)]-\E[\log \pi_X(Y)] \;=\; I(R;X).
\end{equation}
Among all single-shape forecasters $y\mapsto h(y-\hat\mu(x))$, $\E[\log h(R)]$ is maximized
at $h=\bar r$, and the signed CPS attains this optimum. Hence no recalibration that ignores
$X$ can reduce $I(R;X)$. Reducing it requires conditioning on $X$.
\end{proposition}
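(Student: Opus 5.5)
The plan is to reduce everything to statements about the residual $R=Y-\hat\mu(X)$ by a change of variables, and then use the standard decomposition of cross-entropy into entropy plus KL divergence. Since $q^\star(Y\given X)=r(R\given X)$ and $\pi_X(Y)=g(R)=\bar r(R)$ hold pointwise, with unit Jacobian for the shift $y\mapsto y-\hat\mu(x)$ at fixed $x$, the left side of \eqref{eq:gap} is
\begin{equation*}
\E\big[\log r(R\given X)-\log \bar r(R)\big]
=\E_X\!\int r(u\given X)\log\frac{r(u\given X)}{\bar r(u)}\,du
=\E_X\,\KL\big(r(\cdot\given X)\,\|\,\bar r\big),
\end{equation*}
which is $I(R;X)$ by definition. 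Here I condition on $X$ via the tower property and use that $\bar r$ is the $X$-mixture of the $r(\cdot\given x)$, so that $\bar r$ is the marginal density $g$. The finiteness assumption lets us write the difference of expectations as the expectation of the difference, so no $\infty-\infty$ arises. Nonnegativity of $I(R;X)$ is Gibbs' inequality applied inside the outer expectation.

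For the optimality claim, take any single-shape forecaster $y\mapsto h(y-\hat\mu(x))$ with $h$ a density. Its expected log-score is $\E[\log h(R)]=\int \bar r\log h$, because $R$ has marginal density $\bar r$ and $h$ does not depend on $X$. Then
\begin{equation*}
\E[\log \bar r(R)]-\E[\log h(R)]=\KL(\bar r\,\|\,h)\ge 0,
\end{equation*}
with equality iff $h=\bar r$ a.e. So $\bar r$ is the maximizer. The signed CPS limit $\pi_x(y)=g(y-\hat\mu(x))$ is exactly $h=g=\bar r$, so it attains the optimum.

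Combining the two displays, every single-shape forecaster falls short of the oracle by $I(R;X)+\KL(\bar r\,\|\,h)\ge I(R;X)$. Hence no recalibration of the shape $h$ that ignores $X$ (isotonic, quantile remapping, tuning of $\alpha$) can push the gap below $I(R;X)$. Any improvement requires a shape that depends on $x$, that is, conditioning on $X$.

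The main obstacle is not the algebra but the measure-theoretic bookkeeping. We must ensure $\bar r$ really is the marginal of $R$ under the joint law, which requires $\hat\mu$ to be fixed (trained on a disjoint set) so that $R$ is a measurable function of $(X,Y)$. We also need $\E|\log h(R)|<\infty$, or else allow the value $-\infty$ for bad $h$ so the inequality still holds trivially. I would handle this by stating the finiteness hypotheses as in the proposition and treating infinite cases separately. I would also note that the large-calibration limit for the CPS is taken as given from the setup.
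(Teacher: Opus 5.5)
Your proof is correct and essentially matches the paper's. The paper writes the regret of a single-shape forecaster $h$ as $\E[\log r(R\given X)/h(R)]=\KL(P_{X,R}\,\|\,P_X\otimes H)=I(R;X)+\KL(\bar r\,\|\,h)$ via the chain rule for relative entropy along a product reference, which is the same identity you get by adding and subtracting $\log\bar r(R)$, and then, as you do, reads off the gap at $h=\bar r$ and optimality by Gibbs (its single-KL form avoids splitting expectations, but under the stated finiteness hypotheses your bookkeeping is equally valid).
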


\begin{proof}
Let $\mathcal{R}(h)$ be the expected log-score regret of the single-shape forecaster
$y\mapsto h(y-\hat\mu(x))$ relative to the oracle. The oracle scores $Y$ by
$\log q^\star(Y\given X)=\log r(R\given X)$ and this forecaster by $\log h(R)$. The joint
$P_{X,R}$ has density $p(x)\,r(\rho\given x)$ and the product reference $P_X\otimes H$ has
density $p(x)\,h(\rho)$, so
\begin{equation}\label{eq:master}
  \mathcal{R}(h)=\E\!\left[\log\frac{r(R\given X)}{h(R)}\right]
      =\KL\!\big(P_{X,R}\,\big\|\,P_X\otimes H\big)
      = I(R;X)+\KL(\bar r\,\|\,h),
\end{equation}
the last step the chain rule for relative entropy along a product reference, the $R$-marginal
of $P_{X,R}$ being $\bar r$. Taking $h=\bar r$ kills the second term and gives
\eqref{eq:gap}. That second term is $\ge 0$ and is minimized at $h=\bar r$ by Gibbs, so no
$X$-blind shape improves on $I(R;X)$, and $I(R;X)=\KL(P_{X,R}\,\|\,P_X P_R)$ is reduced only
by conditioning on $X$.
\end{proof}

The right-hand side of
\eqref{eq:gap} is exactly the information about the residual distribution that remains in
$X$ after the location predictor is fixed, and calling it the conformal information gap
gives ``blind to conditional residual shape'' a number. 

The signed CPS is log-score optimal among single-shape location forecasters, so conformal
prediction is not dominated within its own class. What it pays for is the class restriction,
not the calibration step. Reading ordinary absolute-residual intervals as a distribution
costs a little more, but only a little.\footnote{The unsigned interval system is not a
predictive system, but sweeping $\alpha$ through the nested intervals
$\hat\mu(x)\pm\hat q_\alpha$ produces one implicitly. It is symmetric about $\hat\mu(x)$
with $|{\cdot}|$ distributed as $|R|$, so its density is the symmetrized marginal residual
law $h_{\mathrm{sym}}(z)=\tfrac12\big(g(z)+g(-z)\big)$, and \eqref{eq:master} with
$h=h_{\mathrm{sym}}$ gives a second term,
$\E[\log q^\star(Y\given X)]-\E[\log h_{\mathrm{sym}}(R)]=I(R;X)+\KL(\bar r\,\|\,
h_{\mathrm{sym}})$. That second term is the price of discarding the sign. It vanishes when
the residual law is symmetric, and since $h_{\mathrm{sym}}\ge\tfrac12\bar r$ it never
exceeds $\log 2$, one bit. The structural loss is $I(R;X)$ either way.}

\paragraph{A false-pooling cost.} A single-shape system asserts that, once $\hat\mu$ is
fixed, one residual law fits everyone, which is to say that $R\perp X$. The gap is the
relative entropy from the true residual experiment to the nearest such independence model,
and one pays it whether or not one notices making it.

\paragraph{A betting rent.} By the log-optimal (Kelly) growth theorem, a gambler who knows a
distribution $P$ and bets against a fair book priced by $Q$ compounds wealth at expected
log-rate $\KL(P\,\|\,Q)$ per round \citep{kelly1956new,cover2006elements}. Read $\bar r$ as
the book posted by a single-shape conformal forecaster and $r(\cdot\given X)$ as what the
oracle knows. The oracle's expected rent per round is
$\E_X\,\KL(r(\cdot\given X)\,\|\,\bar r)=I(R;X)$. A re-leveling that ignores $X$ leaves this
divergence unchanged, and so leaves the growth rate unchanged.

\begin{figure}[t]
\centering
\includegraphics[width=0.88\textwidth]{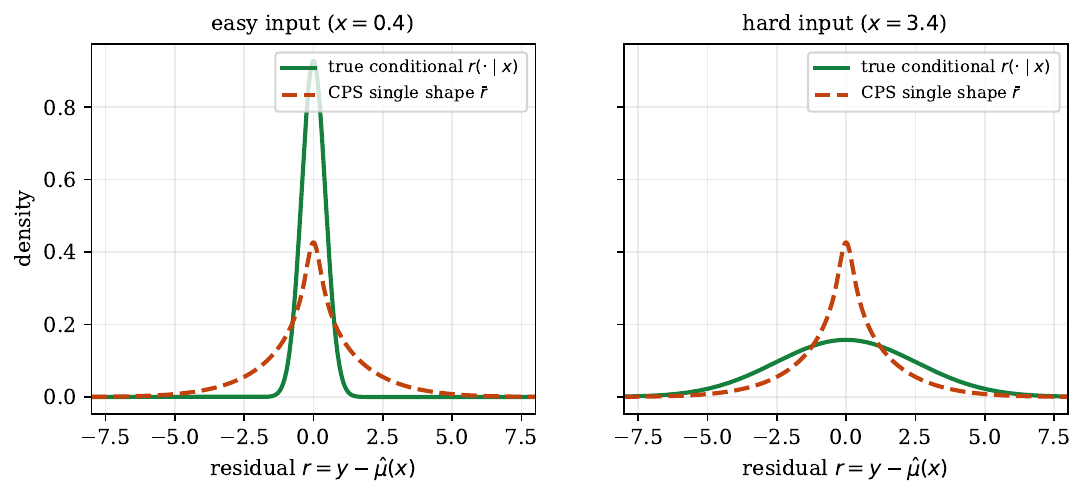}
\caption{The gap. A single-shape conformal predictive system uses the marginal residual law
$\bar r$ for every input. The oracle uses the true conditional $r(\cdot\given x)$, narrow
for easy inputs (left) and wide for hard ones (right). The expected log-score regret is
exactly $I(R;X)$.}
\label{fig:gap}
\end{figure}

A natural worry is that $I(R;X)$ merely measures how badly $\hat\mu$ fits. It does not. Take
the Bayes-optimal location $\hat\mu(x)=\E[Y\given X=x]$, which zeroes the conditional mean
of $R$. Under heteroscedasticity the \emph{conditional law} of the residual still varies, so
$I(R;X)>0$ in general. The gap measures conditional residual scale and shape, and that is
orthogonal to getting the location right.

\section{Discussion}

Coverage error and proper-score performance are two coordinates, both potentially useful. Residual split
conformal moves a report horizontally, changing the set so that coverage error goes to zero
and leaving the density, and hence the score, exactly where it was. Conformal predictive
systems do have a vertical coordinate, but it is capped at the oracle minus $I(R;X)$.
Fitting shape to a proper score moves a report up, and conditioning on $x$ is the only way
to raise it past that cap. A system reported by its coverage alone is a system reported by
one coordinate of a two-coordinate quality.

In the post-hoc setting considered here one should conformalize \emph{last}.
A conformal step applied to a residual score changes the set and leaves the predictive
density, and hence its score, untouched (Remark~\ref{rem:orth}), so a proper score is
improved by estimating and never by certifying. The conformal predictive system is the
exception that proves the point. It does supply a density, and by
Proposition~\ref{prop:gap} it is the best single shape available, but its score stops at the
oracle minus $I(R;X)$.

To close that gap one must condition on $X$. The cheapest way to do so is inside the
conformal step itself, by replacing the plain residual with a normalized score
$s(x,y)=|y-\hat\mu(x)|/\hat\sigma(x)$ \citep{lei2018distribution}, which is exactly the move
that leaves the single-shape class. Whether one then calls the result conformal prediction
or heteroscedastic modeling is a matter of labelling. The sharpness came from $\hat\sigma$.

\paragraph{Scope.} Our strong claims target one object, \emph{post-hoc, marginal, split
conformal prediction with a residual score}. Conditional and shape-adaptive constructions
leave the single-shape class and our result does not apply to them. Examples include
conformalized quantile regression \citep{romano2019conformalized}, Mondrian and binned
conformal predictive systems \citep{bostrom2021mondrian,toccaceli2026crps}, and conformal
training \citep{stutz2022learning}. These can buy sharpness by conditioning on $X$ or
optimizing a proper objective and trading unconditional guarantees, but so can almost every
statistical approach. Where coverage genuinely is the deliverable, as in selective
prediction, shortlisting, or conformal $p$-values for novelty detection
\citep{angelopoulos2021uncertainty,bates2021distribution}, the method is not merely sound
but often ideal. Finally, the exact mutual-information form belongs to the log-score. For
CRPS or the interval score the analogous gap is the regret of the best $X$-blind residual
law under that score, which is not an information quantity.

None of our discussion detracts from the fact that conformal prediction certifies the coverage
 of a set, finite-sample and distribution-free,
under exchangeability. That guarantee is genuine and, for the right objective, exact. Moreover within
the single-shape residual class split conformal prediction is optimal in the sense we have judged it. It
is the choice that may well be the right one for some applications or a reasonable one, either theoretically or pragmatically.  

However coverage answers the question \emph{is
the truth in this region?}, and in forecasting, to name one use, that is frequently {\em not} the question anyone was
asking.  In this note we have gone beyond the usual vague cautions and emphasized
the definable permanent structural concession. Nothing comes without a price, and 
the price for the split conformal coverage guarantee is exactly $I(R;X)$. 

\bibliographystyle{asa-authoryear}
\bibliography{references}

\end{document}